	\newtheorem{lemma}{Lemma}
	\newcommand{\degga}{^{\dag}}		
	\newcommand{\ket}[1]{\left| #1 \right>}	
	\newcommand{\bra}[1]{\left< #1 \right|}	
	\newcommand{\braket}[2]{\left< #1 \vphantom{#2} \right|\left. #2 \vphantom{#1} \right>}				
	\newcommand{\matrixelm}[3]{\left< #1 \vphantom{#2#3} \right| #2 \left| #3 \vphantom{#1#2} \right>} 	
	\newcommand{\abs}[1]{\left| #1 \right|}	
	\newcommand{\norm}[1]{\left|\left| #1 \right|\right|}	
	\DeclareMathOperator{\tr}{Tr}
	\DeclareMathOperator{\id}{{\mathbb{I}}}
	\newcommand{\av}[1]{\left< #1\right>}	
	\newcommand{\bb}[1]{\mathbb{#1}}		%
	\newcommand{\mcal}[1]{\mathcal{#1}}		%
	\newcommand{\cc}{\mathbb{C}}
 	\renewcommand{\vec}[1]{\boldsymbol{ #1 }}
\begin{document}
\title{Composite symmetry protected topological order and effective models}
\author{A. Nietner}
\affiliation{Dahlem Center for Complex Quantum Systems, Freie Universit\"at Berlin, D-14195 Berlin, Germany}
\author{C. Krumnow}
\affiliation{Dahlem Center for Complex Quantum Systems, Freie Universit\"at Berlin, D-14195 Berlin, Germany}
\author{E. J. Bergholtz}
\affiliation{Department of Physics, Stockholm University, AlbaNova University Center, 106 91 Stockholm, Sweden}
\affiliation{Dahlem Center for Complex Quantum Systems, Freie Universit\"at Berlin, D-14195 Berlin, Germany}
\author{J. Eisert}
\affiliation{Dahlem Center for Complex Quantum Systems, Freie Universit\"at Berlin, D-14195 Berlin, Germany}
\date{\today}

\begin{abstract}
Strongly correlated quantum many-body systems at low dimension exhibit a wealth of phenomena, ranging from features of geometric frustration to signatures of symmetry-protected topological order. In suitable descriptions of such systems, it can be helpful to resort to effective models which focus on the essential degrees of freedom of the given model. In this work, we analyze how to determine the validity of an effective model by demanding it to be in the same phase as the original model. We focus our study on one-dimensional spin-$1/2$ systems and explain how non-trivial 
symmetry protected topologically ordered (SPT) phases of an effective spin $1$ model can arise depending on the couplings in the original Hamiltonian. In this analysis, tensor network methods feature in two ways: On the one hand, we make use of recent techniques for the classification of SPT phases using matrix product states in order to identify the phases in the effective model with those in the underlying physical system, employing K\"unneth's theorem for cohomology. As an intuitive paradigmatic model we exemplify the developed methodology by investigating the bi-layered $\Delta$-chain. For strong ferromagnetic inter-layer couplings, we find the system to transit into exactly the same phase as an effective spin $1$ model. However, for weak but finite coupling strength, we identify a symmetry broken phase differing from this effective spin-$1$ description. On the other hand, we underpin our argument with a numerical analysis making use of matrix product states.

\end{abstract}

\maketitle
\section{Introduction}

Strongly correlated quantum many-body systems exhibit a wide range of intriguing properties, such as 
long range magnetic ordering \cite{MagneticOrder,MagneticExcitations}  
or notions of intrinsic or symmetry-protected topological order  \cite{WenBook,WenClassification,EntanglementSpectrumTopologicalPhase,PEPSTopology}.
In systems exhibiting geometric frustration the phenomenology is enriched by effects such as, spin-ice behavior, the emergence of magnetic monopoles, and 
fractionalisation coming into play \cite{MoessnerSpinIce,PyrochloreOxides}.
Such strongly correlated and geometrically frustrated systems enjoy a significant experimental interest, and are probed with techniques such as neutron diffraction allowing for a window into their physics \cite{NeutronScattering,BellaCriticality}   
At the same time, the theoretical and numerical description of these systems constitute significant challenges,
at least in more than one spatial dimension.
Even powerful numerical methods have difficulties to identify the essential ground state features in such systems accurately, as can be seen on the example of the race to identify the ground state properties of the anti-ferromagnetic spin-1/2 Heisenberg model on the Kagom\'e lattice
\cite{WhiteKagome,KagomeSchollwoeck,Chen}. 
In layered materials, properties of such models become even more complex as the system can favor the creation of compounds. Here, 
behavior
such as the formation of a spin liquid based on a novel frustration mechanism can appear \cite{SpinLiquid}.

In general, the description of such interacting quantum many-body systems is often simplified by the use of \emph{effective models} in which the description is reduced to the essential degrees of freedoms of the system needed in order to capture for instance the ground state properties.
In order to be valid, the effective model should of course resemble the essential physics of the original one.
In the most rough reading of this requirement, both systems should be in the same phase.

In this work, we investigate the problem of how to validate an effective model in the tractable setting of one-dimensional spin-$1/2$ systems in order to highlight the essential features of the method. In such systems, effective models are for instance derived by grouping several spins into a single site of definite higher spin.
In the easiest instance 
we think of a ladder of spin-$1/2$ spins which have a horizontal ferromagnetic coupling of pairs and an arbitrary vertical coupling. 
Intuitively, for strong ferromagnetic couplings the pair of $1/2$ spins form a compound of spin $1$ for low energies such that an effective spin $1$ theory, i.e.,~neglecting the spin $0$ sector, should allow to capture the ground state properties of this system accurately.
In such a setting, we require the blocked model and the effective model to be in the same \emph{symmetry protected topologically ordered} (SPT) phase --- ensuring that the most essential features of the system are accounted for. While we put emphasis on a paradigmatic situation to be specific, the overall approach pursued here can be applied to more elaborate settings.

Specifically, in this work we establish a connecting between the SPT phases of the different models in order to validate the use of an effective model in general.
In particular, we show how to obtain from a $SU(2)$ symmetric spin-$1/2$ system (which is always in a trivial phase) a system in a non-trivial SPT phases by composing sites.
This transition roots on the blocking of two spin-$1/2$ spins into an indivisible unit cell. 
The symmetries of the resulting blocked system are then identified with the ones of the effective spin $1$ model which enables us to compare the corresponding SPT phase of the system.
For this, we rely on well established tools for the classification of SPT phases using \emph{matrix product states (MPS)}\cite{WenClassification,ClassificationPhases,EntanglementSpectrumTopologicalPhase} and the cohomology of the respective symmetry groups. 
Using the identification of the symmetry groups of the blocked and effective model, we then derive an order parameter for the detection of this SPT phase of the effective model on the level of the underlying spin-$1/2$ system giving a rigorous tool to compare the effective spin $1$ physics with the physics of the original model.

Further, we illustrate the approach on a geometrically frustrated one-dimensional system --- the bi-layered $\Delta$ chain (BDC) (cf.~Fig.~\ref{lattice}) with an inter-layer ferromagnetic coupling and an anti-ferromagnetic coupling inside each layer. 
This system serves as a one-dimensional archetype of geometrically frustrated systems and appears in the effective description of more complex 
higher dimensional layered systems \cite{SpinLiquid, Kikuchi2011}. 
For the BDC, we investigate the validity of an effective spin-$1$ description, for which the ground state is numerically found to be in the 
\emph{Haldane phase}, depending on the strength of the inter-layer coupling using perturbation theory and numerical \emph{tensor network} (TN) methods\cite{VerstraeteBig,OrusReview,TensorNetworkReview,AreaReview,MPSRev}.
By this argument, we analytically find that starting at a finite coupling strength the blocked system can be well described using the effective spin-$1$ theory where the exact critical coupling strength is determined numerically to be $J_F \approx 0.66 J_{AF}$ with $J_F$ and $J_{AF}$ denoting the corresponding ferromagnetic and anti-ferromagnetic coupling strength.
Finally, applying our order parameter we align our results with studies on the spin-$1/2$ Heisenberg ladder\cite{WhiteSpin1Chain, Pollmann2012, Dagotto1996, Totsuka1995, Matsumoto2004}.

This work is structured into two distinct parts.
We first review the classification of SPT phases in one-dimensional systems using matrix product states and explain in mathematical terms
how non-trivial SPT phases emerge from blocking sites on the example of spin-$1/2$ systems and establish a connection between the order parameters of the blocked and the effective model.  
In the second part, we discuss the concrete situation of the BDC as an example and apply the developed methods.


\section{Symmetry protected topological order}
We begin this section by reviewing the classification of SPT phases using MPS. 
We then argue that the symmetries of a blocked spin-$1/2$ system agree with the ones of an effective spin-$1$ theory and establish an order parameter that can be used to check the validity of the effective model.

\subsection{Group cohomology and SPT in 1D}
Much of what follows will build on the connection between \emph{symmetry protected topological order} in one-dimensional
systems and the second cohomology class of the respective symmetry. In this subsection, we will briefly review some notions 
made use of later in a language of MPS \cite{DMRGWhite92,FCS,MPS}. 
As is well-known, for such one-dimensional systems, matrix product states approximate ground states arbitrarily well, 
at the expense of a bond-dimension that scales moderately
with the system size \cite{SchuchApprox}. This is essentially rooted in the observation of ground states satisfying an \emph{area law}
for suitable entanglement entropies for gapped models \cite{AreaLawOneD,AreaReview}.
In such a language of TN states,
notions of topological order can be particularly concise and at the same time rigorously captured \cite{WenBook,TopologicalOrderInPEPS,WenClassification,ClassificationPhases, EntanglementSpectrumTopologicalPhase,ResonatingValenceBondStates,PEPSTopology,TopologicalOrderInPEPS}. 
In this mindset and in this formalism, an emphasis is put on ground states, while local Hamiltonians reenter stage by means of the concept of a \emph{parent Hamiltonian}.

To be specific, consider a \emph{uniform matrix product state} (uMPS) vector\cite{2011PhRvL.107g0601H}  
in canonical form $\ket{\psi}$ parametrized by the tensor $M_{\alpha,\beta}^i$, where $\alpha,\beta=1,\dots, D$
reflect the virtual indices with bond dimension $D$ and $i$ runs from $1$ to the local physical dimension. When the state vector is 
symmetric under a local transformation $g\in G$
for some symmetry group $G$, in that $(U_g)^{\otimes n}\ket{\psi} = \ket{\psi}$,
together with a transformation of the matrix $M^i\mapsto M^{i\bullet}$ with $\bullet$ 
indicating either complex conjugation or transposition, then it is shown 
in Ref.~\cite{SymmetriesMPS} 
that $U_g$ is reflected on the virtual level as
$U_g \simeq V_g \otimes  \bar V_g$, where $V_g$ is a projective representation of $G$,
so that $V_g$ are unitary and uniquely defined up to a phase. Let now $\ket{\psi}$ be symmetric with respect to the group $G$. Then, for any $g\in G$ we obtain through this formula a $V_g$ which form a representation of $G$ on the virtual level. Fixing the $U(1)$ gauge freedom in the virtual representatives and iterating this formula we find for $g,h\in G$
\begin{align}
	V_gV_h=\omega(g,h)V_{gh}.
	\label{cocycleCommutation}
\end{align}
Using Eq.~\eqref{cocycleCommutation} iteratively on $V_kV_gV_h$ for any $k,g,h\in G$ one finds the relation
\begin{align}\label{cocycle}
	1=\frac{\omega(h,g)\omega(hg,k)}{\omega(h,gk)\omega(g,k)}.
\end{align}
Eq.~\eqref{cocycle} is the $2$-cocycle equation. Any such $\omega$ defines a projective representation of $G$ 
with elements $V_g$. Given a group $G$, the set of possible cocycles $\omega$ over the $G$-module $U(1)$ is not arbitrary but corresponds to $H^2(G,U(1))$, namely the 
\emph{second cohomology group of $G$ over $U(1)$}.
With this in mind, consider a family of MPS $\ket{\psi(\alpha)}$ parametrized by $\alpha\in[0,1]$ in a specific symmetry sector of the symmetry $G$.
In Refs.\ \cite{WenClassification,ClassificationPhases,EntanglementSpectrumTopologicalPhase} it is shown
that the cocycle corresponding to $\ket{\psi(\alpha)}$ can only change from $\alpha=0$ to $\alpha=1$ if the gap of the parent Hamiltonian corresponding to these MPS closes for an $0<\alpha_c<1$. Moreover, they show that any two MPS with the same corresponding cocycle $\omega$ can be connected by a smooth path along which the respective parent Hamiltonian remains gapped. This, however, means that the possible SPT phases with respect to $G$, which are characterized by symmetry protected long range entanglement\cite{Chen2010}, correspond to the non-trivial elements of its second cohomology group,
\begin{align}
	\text{SPT phases of } G\quad\leftrightarrow \quad H^2(G,U(1)).
\end{align}

\subsection{Effective symmetry group}
The specific question addressed in this work is how the SPT phases change when changing from the original model described by the Hamiltonian $H$ to a blocked model described by the Hamiltonian $H_b$, and how they relate to an effective model. 
Therefore, let us first introduce the notions of blocked and effective models and identify the collection of symmetries with respect to  which $H$ and its ground state are invariant. 

As explained in the introduction, we would like to focus our attention to one-dimensional spin-$1/2$ lattice systems. Depending on the interactions present in the Hamiltonian, such a system might be well described by an effective model that captures only the relevant degrees of freedom which we want to assume to originate from a blocked set of original sites. 
To be precise we block two spin-$1/2$ sites of the system into a new site with local Hilbert space $\cc^{2}\otimes \cc^{2}$ and consider them to be the natural unit of the system. The representation of the Hamiltonian with respect to this unit cell will be denoted by $H_b$. The effective model in our case is then the corresponding spin-$1$ system arising from the decomposition of the local Hilbert space into $\frac{1}{2}\otimes\frac{1}{2} = 0\oplus1$ and then neglecting the spin $0$ degrees of freedom.

We assume the systems under investigation to be invariant under $SU(2)$ transformations. 
Further those systems frequently favor additional discrete symmetries such as a time reversal (TR) or lattice inversion (I) symmetry. 
In layered models such as considered in the second part of this work also an additional layer exchange (LE) symmetry can be present.

In order to tell what happens due to the blocking it is important to observe how these symmetries transform under the blocking procedure.
Assume a bi-layered model with the symmetries introduced above and block two opposing vertices in each layer.
The TR symmetry remains and just changes its representation. The LE symmetry becomes a local $\mathbb{Z}_2$ unitary transformation and the I symmetry becomes an I symmetry with respect to  a vertex in the blocked model. 
The most important changes come from the local unitary $SU(2)$ transformations. 
Blocking a pair of vertices maps
\begin{align}
	SU(2) \rightarrow \mcal{G}=\{g\otimes g:\;g\in SU(2)\}
\end{align}
where we use the fundamental SU(2) representation on $\mathbb{C}^2$ (i.e., the spin-$1/2$ representation). In the following we will denote by $\mcal{G}_{tot}$ the full symmetry group under which $H_b$ is invariant. 

It is known that the second cohomology class of the $SU(2)$ is trivial, implying the fact that there are no corresponding non-trivial SPT phases for spin-$1/2$ systems. However, the new insight is that the respective symmetry group due to the blocking $\mcal{G}$ is isomorphic to $\mcal{G}\simeq SO(3)$ having non-trivial second cohomology $H^2(SO(3),U(1))\simeq\bb{Z}_2$ giving rise to two SPT phases --- a trivial phase and an ordered phase which we refer to as Haldane phase here. 
This essentially roots in the fact that by blocking two spin-$1/2$ sites one restricts the possible $SU(2)$ representations to the integer spin sector, locally corresponding to a faithful $SO(3)$ representation, 
as the total number of sites is thereby restricted to be even. 
In what follows, we make this isomorphism explicit and show using K\"unneth's theorem that the topological phases with respect to $\mcal{G}$ manifest themselves in the topological phases of $\mcal{G}_{tot}$.

\subsection{Isomorphism between $\mcal{G}$ and $SO(3)$}
\label{sec:IsomorphismOfGAndSO3}
In this section, we establish the isomorphism between $\mcal{G}$ and $SO(3)$. 
It essentially uses the fact that $\mathcal{G}\simeq SU(2)/\bb{Z}_2$ as the tensor product factors out 
$\mathbb{Z}_2$. Then we show and use the fact that $SU(2)/\mathbb{Z}_2$ is a projective orthogonal representation of $SO(3)$.
We lay out the details here as understanding this isomorphism allows for a clearer interpretation of the order parameter derived in the subsequent section.

\begin{lemma}[Group isomorphism]
There exists a group isomorphism $S^*:SU(2)/\mathbb{Z}_2\rightarrow SO(3)$.
\end{lemma}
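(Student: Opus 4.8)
The plan is to realize the classical two-to-one covering of $SO(3)$ by $SU(2)$ and then quotient by its kernel, so that the claimed map $S^*$ arises from the first isomorphism theorem. First I would identify $\bb{R}^3$ with the real vector space of traceless Hermitian $2\times2$ matrices through the Pauli basis, writing a vector $\vec{x}=(x_1,x_2,x_3)$ as $X=\vec{x}\cdot\vec{\sigma}=\sum_j x_j\sigma_j$. On this space one has $\det(X)=-\abs{\vec{x}}^2$, so the Euclidean norm of $\vec{x}$ is encoded by the determinant of $X$; this is the bookkeeping that will turn the $SU(2)$ action into orthogonal transformations.

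Next I would define the adjoint map $\Phi:SU(2)\to\mathrm{GL}(\bb{R}^3)$ sending $g$ to the linear map $X\mapsto gXg\degga$. I would check it is well defined, i.e.\ that $gXg\degga$ is again traceless (cyclicity of the trace) and Hermitian (since $g\degga=g^{-1}$), so $\Phi(g)$ genuinely acts on the three-dimensional real space above; and that it is an isometry, because the similarity $X\mapsto gXg^{-1}$ preserves $\det(X)=-\abs{\vec{x}}^2$ and hence $\abs{\vec{x}}$. Thus $\Phi(g)\in O(3)$, and since $SU(2)$ is connected and $\Phi$ is continuous with $\Phi(\id)=\id$, the image must lie in the identity component, giving $\Phi(g)\in SO(3)$. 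That $\Phi$ is a homomorphism is immediate from $(gh)X(gh)\degga=g\bigl(hXh\degga\bigr)g\degga$.

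I would then compute the kernel. If $\Phi(g)=\id$ then $gXg\degga=X$ for every traceless Hermitian $X$, so $g$ commutes with $\sigma_1,\sigma_2,\sigma_3$; together with $\id$ these form a complex basis of $M_2(\cc)$, so $g$ is central and Schur's lemma forces $g=\lambda\id$. Unitarity together with $\det g=1$ gives $\lambda=\pm1$, whence $\ker\Phi=\{\pm\id\}\simeq\bb{Z}_2$ (this is precisely the $\bb{Z}_2$ that the tensor product factors out, matching the structure $\mcal{G}\simeq SU(2)/\bb{Z}_2$).

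The step I expect to be the main obstacle is surjectivity, as it is the only genuinely non-formal ingredient. I would establish it by a dimension-and-connectedness argument: $\Phi$ is a smooth homomorphism with discrete kernel, so its differential is injective, and the image is therefore a three-dimensional Lie subgroup of the connected three-dimensional group $SO(3)$, forcing $\mathrm{im}\,\Phi=SO(3)$. Alternatively, and more explicitly, one can take the one-parameter subgroups $g(\theta)=\exp\!\bigl(-i\theta\sigma_j/2\bigr)$ and verify directly that $\Phi\bigl(g(\theta)\bigr)$ is the rotation by $\theta$ about the $j$-axis; since rotations about the three coordinate axes generate $SO(3)$, surjectivity follows. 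With the kernel and surjectivity in hand, the first isomorphism theorem yields the well-defined induced isomorphism $S^*:SU(2)/\bb{Z}_2\to SO(3)$, $S^*(g\bb{Z}_2)=\Phi(g)$, which proves the statement.
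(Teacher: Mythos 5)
Your proof is correct, and it rests on the same underlying object as the paper's --- the conjugation action $X\mapsto gXg^\dagger$ on the real span of the Pauli matrices --- but the logical scaffolding is genuinely different. The paper builds a chain of explicit isomorphisms $SU(2)/\bb{Z}_2\xrightarrow{\rho}\mcal{G}\xrightarrow{\iota}\mcal{G}^+\xrightarrow{S}SO(3)$, detects orthogonality through the trace form $\frac{1}{2}\tr[ab]$, and obtains surjectivity by taking $R\in SO(3)$, forming $\Pi(R)=S\circ R\circ S^{-1}$, and asserting that every such basis transformation of $\bb{H}_3$ is of the form $U(\cdot)U^\dagger$ with $U\in SU(2)$ --- a step that essentially restates surjectivity rather than proving it. You instead work upstairs on $SU(2)$: you detect orthogonality via $\det X=-\abs{\vec{x}}^2$, compute the kernel $\{\pm\id\}$ by hand (commutation with the Paulis plus Schur's lemma), and descend by the first isomorphism theorem. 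Two of your steps are not merely stylistic but repair genuine soft spots in the paper's argument: (i) preserving the quadratic form only places the image in $O(3)$, and you use connectedness of $SU(2)$ to land in the identity component $SO(3)$, whereas the paper conflates ``the set of isometries on $\bb{R}^3$'' with $SO(3)$ and never checks the determinant; and (ii) your surjectivity argument --- either the discrete-kernel/dimension count or the explicit verification that $\exp(-i\theta\sigma_j/2)$ maps to the rotation by $\theta$ about the $j$-axis, these rotations generating $SO(3)$ --- is rigorous where the paper's is circular. What the paper's route buys in exchange is that the physically relevant group $\mcal{G}=\{g\otimes g:\,g\in SU(2)\}$, the actual symmetry of the blocked model, appears explicitly as an intermediate stage, so the isomorphism is delivered in exactly the form used later to interpret the order parameter; to slot your proof into the paper you would only need to add the one-line observation that $[g]\mapsto g\otimes g$ is a faithful representation of $SU(2)/\bb{Z}_2$, identifying your quotient with $\mcal{G}$.
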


\begin{proof}
To start with, it can be easily seen that the map $\rho:SU(2)/\bb{Z}_2\rightarrow\mcal{G}$ which is defined by $[g]\mapsto g\otimes g$ is a faithful representation (where $[g]$ denotes the equivalence class $g\bb{Z}_2\in SU(2)/\bb{Z}_2$ and the map is independent of the representative of that class).
Now, we use the fact that the map $\iota$ from the tensor product of matrices into the set of completely positive  maps 
\begin{align}
	\iota:\mcal{G}\rightarrow\mcal{G}^+=\{\Pi_g:=g(\cdot)g^\dagger|g\in SU(2)\},\quad g\otimes g\mapsto g(\cdot)g^\dagger
\end{align}
is an isomorphism if $\mcal{G}$ and $\mcal{G}^+$ are interpreted as groups. 
Next we construct a isomorphism from the latter to $SO(3)$. 
Consider therefore $\bb{H}_3:=\text{span}_{\bb{R}}\{\sigma_x,\sigma_y,\sigma_z\}$ the real space of Hermitian traceless complex $2\times2$ matrices, where the $\sigma_i$ are the Pauli matrices. 
It is natural to define a scalar product in $\bb{H}_3$ as 
\begin{align}
	\braket{a}{b}_{\bb{H}_3}=\alpha\tr[ab]
\end{align}
for any pre-factor $\alpha>0$ and any $a,b\in\bb{H}_3$.
Using the commutation relations of Pauli matrices 
\begin{align}
	\sigma_i\sigma_j=\delta_{i,j}+i\epsilon_{i,j,k}\sigma_k
\end{align}
it turns out that choosing $\alpha=\frac{1}{2}$ renders the map 
\begin{align}
	S:\bb{R}^3\rightarrow\bb{H}_3:\vec{v}\mapsto\vec{v}\cdot\vec{\sigma},
\end{align}
which is surjective, an isometry (preserving the scalar product) such that there exists an inverse $S^{-1}$. 
Let us now make full use of $S$ in order to connect $\mcal{G}^+$ and $SO(3)$. For any $\Pi_U\in\mcal{G}^+$ we define the map $R(\Pi_U):=S^{-1}\circ\Pi_U\circ S$ which can by definition be expressed as $R(\Pi_U):=S^{-1}\circ (U S(\cdot) U\degga)$ for some $U\in SU(2)$. Obviously $R(\Pi_U)$ is linear and we find (using the fact that $S$ is an isometry) 
\begin{align}
	&\braket{R(\Pi_U)\vec{v}}{R(\Pi_U)\vec{w}}_{\bb{R}^3}=\frac{1}{2}\tr[U\vec{v}\cdot\vec{\sigma}U\degga U\vec{w}\cdot\vec{\sigma}U\degga] \notag \\[8pt]
	&\quad=\frac{1}{2}\tr[\vec{v}\cdot\vec{\sigma}\vec{w}\cdot\vec{\sigma}]= \braket{\vec{v}}{\vec{w}}_{\bb{R}^3}.
\end{align}
Hence, $R(\Pi_U)\in SO(3)$, the set of isometries on $\bb{R}^3$. The other way around, for any $R\in SO(3)$ we can define a map $\Pi(R)=S\circ R\circ S^{-1}$ such that for any $\vec{v}\cdot\vec{\sigma}\in\bb{H}_3$ it holds 
\begin{align}
	\Pi(R)(\vec{v}\cdot\vec{\sigma})=(R\vec{v})\cdot\vec{\sigma}.
\end{align}
Clearly, as $\mcal{G}^+$ is the set of basis transformations in $\bb{H}_3$ we can translate the basis transformation in $\bb{R}^3$ induced by $R$ to a basis transformation $U(\cdot)U\degga$ in $\bb{H}_3$ with $U\in SU(2)$ such that $\Pi(R)=\Pi_U$. 
Hence, the isometry $S$ induces a group isomorphism which we also denote by $S:\mcal{G}^+\rightarrow SO(3),\;\Pi_U\mapsto R(\Pi_U)$, where the homomorphic structure follows from the properties of the scalar product. Finally, we can define  $S^\star=S\circ\iota\circ\rho:SU(2)/\bb{Z}_2\rightarrow SO(3)$ which defines the claimed group isomorphism.
\end{proof}

\subsection{SPT phases of $\mathcal{G}_{\rm Tot}$}\label{sptPhasesG}
We now argue that the 
SPT phases with respect to  $\mcal{G}$ are embedded in the SPT phases of $\mcal{G}_{tot}$, i.e.,~we explain how the additional discrete symmetries of the system affect the cohomology. 
To do so, we employ K\"unneth's theorem for cohomology which states that for two groups $G_1$ and $G_2$, and corresponding $G$-module $M$ \cite{75485}
\begin{widetext}
\begin{align}\label{kunneth}
H^n(G_1\times G_2,M)\simeq 
\left[\bigoplus_{i+j=n}H^i(G_1,M)\otimes H^j(G_2,M)\right]\oplus \
\left[\bigoplus_{p+q=n+1}Tor_1^\bb{Z}(H^p(G_1,M),H^q(G_2,M))\right].
\end{align}
\end{widetext}
Eq.~\eqref{kunneth} holds true even for non-trivial actions of the $G_i$ on $M$ given that one of the $G$-modules is $\mathbb{Z}$-free. 
In our case  we use $G_1\times G_2=\mathcal{G}_{tot}$ with $G_1=SO(3)$, $G_2$ being the remaining discrete and finite symmetries and $M=U(1)$. The action on $M$ is trivial for all symmetries but TR. 

Note that the Tor functor $Tor_m^R(A,B)$ is trivial given $A$ is a free $R$ module and $m\geq1$. Moreover, $Tor_m^R$ is symmetric given $R$ is abelian. Hence, as $G_2=\bb{Z}_2\times\bb{Z}_2\times\bb{Z}_2^{TR}$ and $R=\bb{Z}$,
the Tor functor in Eq.~\eqref{kunneth} is trivial in our case\cite{ChenGuWen2011, 75485}. 
Similarly, we find that $H^1(SO(3),U(1))$ as well as $H^1(\bb{Z}_2\times\bb{Z}_2\times\bb{Z}_2^T,U(1))$ is trivial, such that the possible SPT phases in our system reduce to 
\begin{align}
	&H^2(\mcal{G}_{tot},U(1))=H^2(SO(3),U(1))\oplus \notag \\
	&H^2(\bb{Z}_2\times\bb{Z}_2\times\bb{Z}_2^T,U(1)).
\end{align}
In particular, as $H^2(SO(3),U(1))$ is a subgroup of this, the group element corresponding to the Haldane phase is contained in $H^2(\mcal{G}_{tot}, U(1))$ which corresponds to the possible non-trivial SPT phase in our system.

\subsection{Detection of SPT phases}\label{derivationOrderParameter}
In this section we are going to derive an order parameter for the $SO(3)$ symmetry characterizing the Haldane phase that 
emerges due to the blocking of opposing lattice sites. 
We therefore follow the construction of the $SO(3)$ order parameter for the standard Haldane phase in the spin 1 chain as described in \cite{EntanglementSpectrumTopologicalPhase, PollmannTurner}. As shown in 
Ref.~\cite{SymmetriesMPS}, 
given a uMPS parametrized by the 3-tensor $M$ and a local (anti-) unitary symmetry operation $\mcal{G}$ parametrized by the matrix $g$ on the physical level (and complex conjugation of $M$ in case of anti-unitaries), then $\mcal{G}$ acts on the uMPS as explained in Fig.~\ref{fig:SymMPSVirtRep}. 

\begin{figure}
 \includegraphics{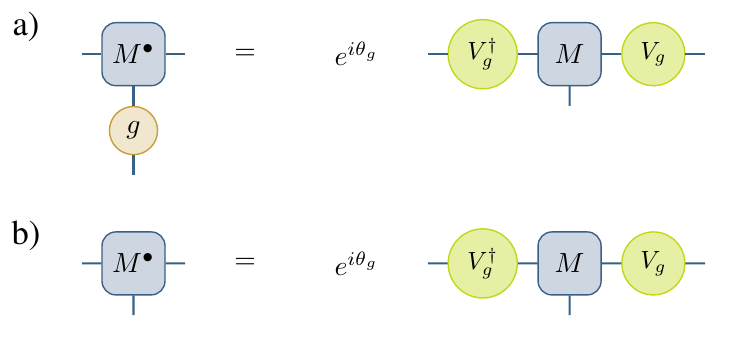}
 \caption{Illustration of the action of the symmetry group on a uMPS using the usual diagrammatic notation of tensor networks. 
		Given a MPS characterized by a 3-tensor $M$, we obtain the virtual representation $V_g$ as displayed in the upper panel a) from the action of the symmetry operation $g$ on the MPS tensor. If the state is symmetric under the action of $g$ the relation simplifies as shown in the lower panel b).}
 \label{fig:SymMPSVirtRep}
\end{figure}

Let us now consider the action of $\mathcal{G} \simeq SO(3)$ looking at the two elements 
\begin{align}
	\mathcal{R}_j = \exp(i\pi \sigma_j/2) \otimes \exp(i\pi \sigma_j/2)\in\mathcal{G}
	\label{math:physicalSymmetry}
\end{align}
for $j=x,z$. 
By the isomorphism of $\mcal{G}$ and $SO(3)$ explained in Sec.~\ref{sec:IsomorphismOfGAndSO3}, we can interpret those operators as $\pi$ rotations around the $j$ axes in $\bb{R}^3$.
Using the relation displayed in Fig.~\ref{fig:SymMPSVirtRep} a) we obtain for each $j$ the respective virtual representation $V_j$. Iterating the relation in Fig.~\ref{fig:SymMPSVirtRep} b) twice and using $[V_j,\Lambda]=0$ as well as the defining formulas for the left canonical gauge
\begin{align}
	\sum_j M^\dagger_j \id M_j = \id\;;\quad\sum_j M_j \Lambda M^\dagger_j = \Lambda
	\label{mpsGauge}
\end{align}
we obtain $V^2_j=e^{i\theta_j}\id$. 
Hence, using the $U(1)$ gauge freedom of the $V_j$'s we can substitute $V_j\mapsto e^{-i\theta_j/2}V_i$ such that we assume $V^2_j=\id$ in the following. 
Consecutively applying $\mathcal{R}_x$ and $\mathcal{R}_z$ and iterating the relation in Fig.~\ref{fig:SymMPSVirtRep} b) twice and using Eq.~\eqref{mpsGauge}, however, we obtain $V_xV_zV_xV_z=e^{i\theta_{x,z}}\id$. 
As the gauge of the $V_j$ is already fixed the phase $\theta_{x,z}$ can not be absorbed in the $V_j$. 
Moreover, using the unitarity of the $V_j$'s we can rewrite this in a gauge invariant form
\begin{align*}
	e^{i\theta_{x,z}}\id&=V_xV_zV_xV_z\\
	&=V_xV_ze^{i\theta_x/2}e^{i\theta_z/2}e^{-i\theta_x/2}e^{-i\theta_z/2}V_xV_z\\
	&=\tilde{V}_x\tilde{V}_z\tilde{V}^\dagger_x\tilde{V}^\dagger_z
\end{align*} 
where the $\tilde{V}_j$'s are the virtual representation in an arbitrary gauge, such that we obtain
\begin{align}
	\tilde{V}_x\tilde{V}_z=e^{i\theta_{x,z}}\tilde{V}_z\tilde{V}_x.
\end{align}
Obviously, the phase $\theta_{x,z}$ is connected to the cocycle evaluated at $\omega(\mathcal{R}_x,\mathcal{R}_z)$\cite{EntanglementSpectrumTopologicalPhase}. Hence, any $\theta_{x,z}\neq2\pi n$ corresponds to a non-trivial cocycle and it turns out that for $SO(3)$ it holds that $e^{i\theta_{x,z}}=1,-1$ where $-1$ corresponds to the topological non-trivial Haldane phase. Keeping this in mind it is 
straightforward to define the $SO(3)$ order parameter as
\begin{align}
	\mathcal{O}_{SO(3)}=\frac{1}{D}Tr\left[\tilde{V}_x\tilde{V}_z\tilde{V}^\dagger_x\tilde{V}^\dagger_z\right]
	\label{orderParameter}
\end{align}
for symmetric MPS and $0$ else, where we use the gauge free representation $\tilde{V}_j$.

In order to make this more precise, from K\"unneth's theorem we find that
\begin{align}\label{kunneth2}
	&H^2(\mathcal{G}_{tot},U(1))\simeq \notag\\
	&H^2(SO(3),U(1)) 
	\oplus H^2(\bb{Z}_2^L\times\bb{Z}_2^T\times\bb{Z}_2^F,U(1)),
\end{align}
as explained in \ref{sptPhasesG}. Hence, given we find a non trivial phase $\mathcal{O}_{SO(3)}=-1$ for some $g,h\in\mathcal{G}_{\rm Tot}$ of the form $g=g_{SO(3)}e_{\bb{Z}_2^L\times\bb{Z}_2^T\times\bb{Z}_2^F}$ and $h=h_{SO(3)}e_{\bb{Z}_2^L\times\bb{Z}_2^T\times\bb{Z}_2^F}$, we can deduce from this that the corresponding factor $\theta_{gh}$ is corresponding to a non trivial element from $H^2(\mathcal{G}_{\rm Tot},U(1))$ that is from the $H^2(SO(3),U(1))$ sector in ($\ref{kunneth2}$). In other words we deduce from this the fact that we are in a topologically non trivial SPT phase with respect to $SO(3)$, the Haldane phase. 
Now, this order parameter allows us to compare the SPT phase of the blocked model with the phase of an effective $SO(3)$ symmetric spin-$1$ model in order to explore the validity of the effective model as we will illustrate in more details in the next part on the example of the bi-layered delta chain.

\section{Case study: a bi-layered $\Delta$-chain}
In the focus of attention in this work, as a proxy for similar microscopic models allowing for an effective description,
 are \emph{bi-layered $\Delta$-chains} (BDC) with \emph{anti-ferromagnetic Heisenberg interaction} 
within the layer and \emph{ferromagnetic Heisenberg interaction} between the layers (see  Fig.~\ref{lattice}). The corresponding model Hamiltonian is given by 
\begin{align}
	H=J_{AF}H_{AF}+J_F H_F
\end{align} 
where the ferromagnetic and anti-ferromagnetic Heisenberg terms $H_F$ and $H_{AF}$ are given by 
\begin{align}
&	H_{AF}=\sum_{\langle i,j \rangle \in E_{AF}}\vec{s}_i\cdot\vec{s}_j\label{eq:HAFDef},\\
&	H_{F}=-\sum_{\langle i,j \rangle \in E_{F}}\vec{s}_i\cdot\vec{s}_j.
\end{align}
Here $E_{AF}$ denotes the anti-ferromagnetic edge set (blue bonds in Fig.~\ref{lattice}) and $E_F$ denotes the ferromagnetic edge set (red bonds in Fig.~\ref{lattice}). Moreover, throughout the paper we use the convention $J_{AF}, J_F \geq 0$.

\begin{figure}
	\includegraphics{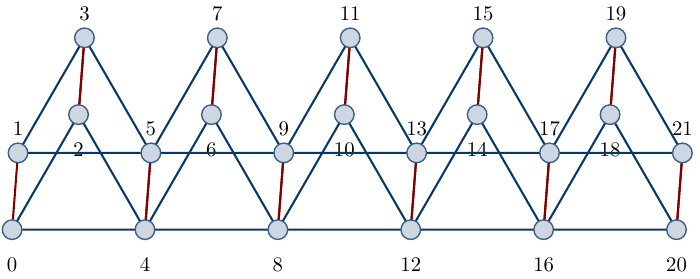}
	\caption{Illustration of the double layered $\Delta$-chain lattice with 22 sites and open boundary conditions. On the blue bonds (within the $\Delta$-chain layers), collected in the edge set $E_{AF}$, we assume an anti-ferromagnetic Heisenberg interaction where on the red bonds between opposing vertices, which are collected in the edge set $E_F$, a ferromagnetic Heisenberg interaction is introduced. In case of periodic boundary conditions, which we focus on in this work, the sites 20 and 21 are identified with the sites 0 and 1 respectively.}
	\label{lattice}
\end{figure}

The ratio $J_F/J_{AF}$ determines the physics of the model. 
Therefore, in view of performing a numerical investigation of the system we chose to parametrize the couplings using a compactly supported parameter $\theta\in[0,\pi/2]$ by defining
\begin{align}
	H(\theta)=\cos(\theta)H_{AF}+\sin(\theta)H_F.
\end{align}
Here, $\theta$ interpolates between weak ferromagntic inter-layer couplings ($\theta\approx 0$) and very strong ferromagnetic interactions ($\theta\approx \pi/2$).
As explained in the introduction, we are going to draw the connection between the original double layered spin-$1/2$ model and the effective single-layered spin-$1$ model on the level of phases. 
To be precise, the effective model here is defined as a single-layered delta chain with spin-$1$ particles on each site and anti-ferromagnetic coupling just as in Eq.~\eqref{eq:HAFDef}.
In order to establish the comparison of the effective and the original model, it is crucial to introduce the blocking in the original spin-$1/2$ model of the ferromagnetically coupled edges into a single unit cell. 
This leads to a new Hamiltonian describing a single-layered $\Delta$-chain of generalized quantum particles supported on a local Hilbert space $\cc^2\otimes\cc^2$. 
For the sake of distinction we will refer to the blocked Hamiltonian as $H_b(\theta)$. 
The main reason for the emergence of the Haldane phase in a system basically described by a spin-$1/2$ ladder can be traced to the annihilation of the phases in the tensor product $g\otimes g$.

In order to obtain an intuition for the BDC we are first going to analytically investigate its strong and weak coupling limits in the following three sections. 
Subsequently we employ numerical MPS based methods in order to investigate the order parameter of the blocked model probing the validity of the effective spin-$1$ model explained above.

\subsection{single-layered spin-$1/2$ and uncoupled $\Delta$-chains}\label{section:singleLayer}
To start with, we consider an isolated spin-$1/2$ $\Delta$-chain with anti-ferromagnetic Heisenberg interaction. 
We can write the Hamiltonian of the system as $H_{\rm single}=\sum_\Delta h_\Delta$, where we sum over each triangle in the chain, and $h_\Delta$ refers to the local Hamiltonian on the respective triangle $h_\Delta=\vec{s}_1\cdot \vec{s}_2+\vec{s}_2\cdot \vec{s}_3+\vec{s}_3\cdot \vec{s}_1 = \frac{1}{2} \vec{s}_\Delta^2+const.$ 
When coupling two spin-$1/2$ particles the respective Hilbert space decomposes as  $\frac{1}{2}\otimes\frac{1}{2}=0\oplus1$. 
Doing the same with three spins we end up with $\frac{1}{2}\otimes\frac{1}{2}\otimes\frac{1}{2}=\frac{1}{2}\oplus\frac{1}{2}\oplus\frac{3}{2}$. 
We can easily construct a (non-orthogonal) basis for the ground state space of a single triangle as follows. 
Any ground state has a total spin of $s_{\rm tot}=\frac{1}{2}$. 
So we take two of the three spins and couple them in a singlet.
The third spin can be chosen arbitrarily, such that the three spins form a $s_{\rm tot}=\frac{1}{2}$ state. In particular, we find four independent states by putting the singlet either on the edge $\av{1,2}$ or $\av{2,3}$ and putting the third spin either in the state vector $\ket{0}$ or $\ket{1}$. In particular, in what follows we will denote these states on a single triangle as 
\begin{align}
	&\ket{\nearrow}\otimes\ket{s}=\frac{1}{\sqrt{2}}\left( \ket{0,1} - \ket{1,0} \right)\otimes\ket{s},\\ 
	&\ket{s}\otimes\ket{\nwarrow}=\frac{1}{\sqrt{2}}\ket{s}\otimes\left( \ket{0,1} - \ket{1,0} \right)
\end{align}
where the arrow indicates a singlet, which we will also refer to as dimer, on the up, respectively downwards facing edge and $s=0,1$. It is easy to compute
\begin{align}\label{math:singletOverlap}
	\abs{(\bra{s}\otimes\bra{\nwarrow})(\ket{\nearrow}\otimes\ket{\tilde{s}})} \leq \frac{1}{\sqrt{2}}.
\end{align}

Using this observation we can construct the ground states of the full (periodic) chain of $N$ sites as follows. Either we place a singlet on any of the up  edges, or we put it on any of the down edges. 
That these states correspond to ground states of the system is clear from the observation that they are locally the ground states of each of the $h_\Delta$ and therefore the minimal possible energy expectation value and it is also clear that those are the only states corresponding to this energy.
Hence, periodic single-layered $\Delta$-chain has a two fold degenerate ground state space spanned by the dimer states formed by singlets sitting on all up or down edges. We denote these states by
\begin{align}
	&\ket{\nearrow}_N=\ket{\nearrow}^{\otimes N/2}\label{math:ketUp}\\
	&\ket{\nwarrow}_N=\ket{\nwarrow}^{\otimes N/2}\label{math:ketDown}
\end{align}
where we will drop the index $_N$ if it is clear from the context that we talk about the many particle wave function. It is worth mentioning that the state $\ket{\nwarrow}$ in the definitions \eqref{math:ketUp} and \eqref{math:ketDown} is implicitly assumed to be placed on edges $\langle 2n-1, 2n\rangle$ facing upwards whilst the state $\ket{\nearrow}$ is assumed to be placed on edges $\langle 2n, 2n+1\rangle$ facing downwards. This implies that the tensor products in the definitions \eqref{math:ketUp} and \eqref{math:ketDown} act between different local Hilbert spaces. 
Moreover, it is easy to see that the ground state degeneracy for an open $\Delta$-chain would grow proportional in system size as for any site $2i+1$ the state $\ket{\nearrow}^{\otimes i}\otimes\ket{s}\otimes\ket{\nwarrow}^{\otimes N/2 - i}$ is a ground state. 

Having identified the ground state of a single chain we can easily define the ground state space of the uncoupled bi-layered $\Delta$-chain. As the Hamiltonian of the uncoupled system is simply the sum of the individual Hamiltonians of each layer $H_{AF}=H_{\rm single}\otimes \id + \id\otimes H_{\rm single}$, the ground state space is spanned by the tensor products of the ground states of the individual layers. The two ground states of a single layer give therefore rise to four states spanning the ground state space of the bi-layered system.

Using the notation introduced above we can write a basis spanning the four dimensional ground state space of the uncoupled BDC as
\begin{align}
&\ket{1} := \ket{ \nearrow} \otimes \ket{ \nearrow}  \label{unperturbedGroundState1} ,
&\ket{2} := \ket{ \nearrow} \otimes \ket{ \nwarrow},\\
&\ket{3} := \ket{ \nwarrow}\otimes \ket{ \nwarrow}  ,
&\ket{4} := \ket{ \nwarrow}\otimes \ket{ \nearrow} \label{unperturbedGroundState4}.
\end{align}
It is worth mentioning that the scalar product between any of those states vanishes exponentially in the thermodynamic limit as
\begin{align}
	\abs{{}_{N}{\braket{i}{j}}{_N}}\leq \frac{1}{\sqrt{2}^{N/4}}\;,\quad i\neq j
\end{align}
for $i,j=1,\dots,4$. Note that the state vectors $\ket{1}$ and $\ket{3}$ are of product form with respect to the unit cells introduced from the blocking of opposing spins and therefore correspond to a trivial SPT phase, where the other two states correspond to the non-trivial phase. 

\subsection{Strong coupling limit}
Let us now characterize the ground states in the extremal coupling regimes. 
To start with we analyzed the strong coupling limit. 
Rewriting the Hamiltonian of the system as $H= J_F(({J_{AF}}/{J_F}) H_{AF} + H_F)$ we find $H_{AF}$ to be infinitely suppressed in the infinite coupling limit $J_F\rightarrow\infty$ with $J_{AF}=const.$ Investigating  $H_\infty=({J_{AF}}/{J_F}) H_{AF} + H_F$ for $J_{AF}=1$ and large $J_F$ and using the local Hilbert space decomposition $\frac{1}{2}\otimes\frac{1}{2}=0\oplus1$ for two opposing vertices, we find that $H_\infty$ equals the effective model, the single-layered spin 1 $\Delta$-chain in the spin 1 subspace of the opposing vertices, plus a weak perturbation connecting the spin 0 and spin 1 spaces which is suppressed by $J_F$. Hence, to zeroth order perturbation theory the ground state in the strong coupling limit is the ground state of the effective spin 1 model mapped onto the double layered $\Delta$-chain. Using the TDVP for uMPS\cite{2011PhRvL.107g0601H} we numerically computed the ground space for the spin 1 system. 
Unsurprisingly, as the spin 1 $\Delta$-chain is an one dimensional anti-ferromagnetic spin-$1$ Heisenberg model, its ground state is found to be in the same phase 
as the anti-ferromagnetic spin-$1$ Heisenberg model on a linear chain, the topologically 
non-trivial Haldane phase.
From this consideration it is however unclear to what extend this result caries over to other coupling strengths as the ground state space of the uncoupled chain is four-fold degenerate and therefore in a different phase. In what follows we will investigate the weak coupling limit in order to show that a weak ferromagnetic interaction favours the space spanned by the topologically trivial states and leads to a symmetry broken phase. Hence, for some finite coupling strength a transition from symmetry broken to the SPT phase has to occur in the BDC.

\subsection{Weak coupling limit}\label{sec:Weak coupling limit}
Let us now investigate the weak coupling limit.
We denote the ground state space spanned by the state in Eq.~\eqref{unperturbedGroundState1}, \eqref{unperturbedGroundState4} of $H_{AF}$ the uncoupled periodic BDC with $N$ sites by $\mcal{H}_0$ and consider the perturbation $V = H_F$ to the uncoupled BDC.
As we find $\bra{\psi}V_e\ket{\psi}=0$ for every ferromagnetic edge $e$ and all $\ket{\psi}\in\mcal{H}_{0}$ we need to find the minimizer of the second order $\bra{\psi}V(H_{AF}-E_0)^{-1}P_1\ket{\psi}$ with $\ket{\psi}\in\mcal{H}_0$. 
Here $P_1$ is the projection onto $\mcal{H}_0^\perp$ the orthogonal complement of the ground state space of $H_{AF}$ and $E_0$ is its ground state energy. 
Therefore we define the matrix $(\Delta_{i,j})$ in the thermodynamic limit as $\Delta_{i,j}=\lim_{N\rightarrow\infty}\Delta_{i,j}(N)$ with
\begin{align}
	\Delta_{i,j}(N) = \matrixelm{i}{V(H_{AF}-E_0)^{-1}P_1V}{j}, \label{energyCorrectionMatrix}
\end{align}
where $i,j=1,2,3,4$ refer to the ground states of the unperturbed Hamiltonian as defined in Eq.~\eqref{unperturbedGroundState1}, \eqref{unperturbedGroundState4}. 
Our aim is now to give an estimate of the entries $\Delta_{i,j}$.

Writing $P_M$ for the projection onto the maximal eigenspace of $H_{AF}$ and $P_M^\perp$ for the projection onto the orthogonal complement and using $P_MP_1=P_M$ we can rewrite \eqref{energyCorrectionMatrix} as follows
\begin{align}
	\Delta_{i,j}(N) = & -\frac{1}{E_0}\matrixelm{i}{V\left(\id -\frac{H_{AF}}{E_0}\right)^{-1}P_1V}{j} \notag\\
	 = & -\frac{1}{E_0}\matrixelm{i}{V\left(\id -\frac{H_{AF}}{E_0}\right)^{-1}P_MV}{j} \notag\\ 
	&- \frac{1}{E_0}\matrixelm{i}{V\left(\id -\frac{H_{AF}}{E_0}\right)^{-1}P_M^\perp P_1V}{j}.
\end{align}
Next, we observe that the energy of $H_{AF}$ locally on each of the $\frac{N}{2}$ triangles is in $[-\frac{3}{4},\frac{3}{4}]$ with 
\begin{equation}
E_{\rm max}^{\rm local}=-E_{\rm min}^{\rm local}=\frac{3}{4}
\end{equation}
and hence $H_{AF}+\abs{E_0}\geq0$ (and, as is easy to show, $\abs{E_0}=E_{\rm max}$). Therefore, we can expand the last term into a Neumann series as 
\begin{equation}
\matrixelm{\phi}{\frac{H_{AF}}{E_0}}{\psi}/\braket{\phi}{\psi}<1
\end{equation}
 for all $\psi\in{\rm Img}(P_M^\perp P_1)$. We find
\begin{align}\label{math:perturbationMatrix}
	\Delta_{i,j}(N) = & -\frac{1}{E_0}\matrixelm{i}{V\left(\id -\frac{H_{AF}}{E_0}\right)^{-1}P_MV}{j} \notag\\ 
	&  - \frac{1}{E_0}\sum_{k}\matrixelm{i}{V \left(\frac{H_{AF}}{E_0}\right)^k P_M^\perp P_1V}{j}.
\end{align}
It is easy to see that
\begin{align}
 &\frac{1}{\abs{E_0}}\sum_k\abs{ \matrixelm{i}{V\left(\frac{H_{AF}}{E_0}\right)^kP_M^\perp P_1V}{j} }\notag\\
	&\qquad\leq \frac{1}{\abs{E_0}}\sum_k \norm{V\ket{i}}^2\norm{\left.\frac{H_{AF}}{E_0}\right|_{{\rm Img}(P_M^\perp P_1)}}^k
\end{align}
such that, due to the projectors, by construction for every $N$ the sum over $k$ in Eq.~\eqref{math:perturbationMatrix} is absolutely converging.

Next we estimate the first summand in  Eq.~\eqref{math:perturbationMatrix} as well as each of the terms in the von Neumann series in the large $N$ limit. Therefore, we make use of the fact that the overlap of the excitations and the ground state space and the maximal energy eigenspace is small. In particular, 
\begin{align}
	\abs{\matrixelm{i}{V}{j}} \leq \sum_a\abs{\matrixelm{i}{V_a}{j}} \leq \frac{N}{\sqrt{2}^{N/4-1}} (1-\delta_{i,j})
\end{align}
as for $i=j$ for each of the summands the vector contains a triplet being orthogonal to the singlet in the dual vector, 
whilst for $i\neq j$ there are at least $N/4-1$ triangles being covered with different singlet configurations (cf.~Eq.~\eqref{math:singletOverlap}). Similarly, for any $\ket{M}\in {\rm Img}(P_M)$ it holds
\begin{align}
	\matrixelm{M}{V}{i} = \sum_a \matrixelm{M}{V_a}{i} = 0
\end{align}
as each triangle in $\ket{M}$ is covered by a spin $\frac{3}{2}$ configuration being orthogonal to the singlet configurations in $V_a\ket{i}$. Therefore, using the continuity of the scalar product, we can drop the projections $P_1$ and $P_M^\perp$  in the thermodynamic limit and drop the first summand in  Eq.~\eqref{math:perturbationMatrix}. We obtain
\begin{align}
	\Delta_{i,j} &= -\sum_k\lim_{N\rightarrow\infty}\frac{1}{E_0}\matrixelm{i}{V\left(\frac{H_{AF}}{E_0}\right)^k V}{j} \notag\\
	&= -\sum_k\lim_{N\rightarrow\infty}\frac{1}{E_0}\sum_{a,b}\matrixelm{i}{V_a\left(\frac{\sum_\Delta h_\Delta}{E_0}\right)^k V_b}{j}\label{math:offDiagonalEst1_1}
\end{align}

Let us now estimate the off diagonal elements $i\neq j$. Assume therefore that $\ket{i}=\ket{3}$ and $\ket{j}=\ket{1}$, where the other cases work in the same fashion and only the exponent in the final estimation might change by a factor of two. 
Expanding the power of sums over triangles we can bound
\begin{align}
 &\abs{\matrixelm{3}{V_a\left(\frac{\sum_\Delta h_\Delta}{E_0}\right)^k V_b}{1}}\notag\\
 &\qquad\leq\sum_{i\in \left[\frac{N}{2}\right]^k} \abs{\matrixelm{3}{V_a\frac{h_{\Delta_{i_1}}\dots h_{\Delta_{i_k}}}{E_0^k}  V_b}{1} }.\label{math:offDiagonalEst1_2}
\end{align}
Moreover, it is easy to compute
\begin{align}
	&V_a\ket{\nearrow}_N\otimes \ket{\nearrow}_N =\ket{\nearrow}_{N-1}\otimes \ket{\nearrow}_{N-1}\otimes \notag\\
	&\otimes\left(\sum_{i,j=-1}^1\alpha_{i,j}\ket{_i\diagup}\otimes\ket{_j\diagup}\right)_a
\end{align}
where $\ket{_i\diagup}$ denotes a triplet with $m_z=i$ sitting on the upwards facing edge of a triangle (similarly $\ket{\diagdown_i}$ for the downwards facing edge), and $\alpha_{i,j}=\frac{1}{4}\delta_{i,-j}(-1)^i$. Also we implicitly assume the tensor products to be ordered in such a way that the last factor on the right hand side corresponds to the pair of triangles for which the singlets (before the action of $V_a$) are intersecting the ferromagnetic edge $a$, indicated by the index $a$. Then we can rewrite the summands in \eqref{math:offDiagonalEst1_2} as 
\begin{widetext}
\begin{align}
	&\matrixelm{3}{V_a\frac{h_{\Delta_{i_1}}\cdots h_{\Delta_{i_k}}}{E_0^k}  V_b}{1} =\notag\\
	&= {}_{N-1}{\bra{\nwarrow}}\otimes {}_{N-1}{\bra{\nwarrow}}\otimes\left(\sum_{i,j=-1}^1\alpha_{i,j}\bra{\diagdown_i}\otimes\bra{\diagdown_j}\right)_a \frac{h_{\Delta_{i_1}}\cdots h_{\Delta_{i_k}}}{E_0^k}\ket{\nearrow}_{N-1}\otimes \ket{\nearrow}_{N-1}\otimes\left(\sum_{i,j=-1}^1\alpha_{i,j}\ket{_i\diagup}\otimes\ket{_j\diagup}\right)_b . \label{math:preBound}
\end{align}
\end{widetext}
We know that each $h_\Delta\ket{\nearrow}\otimes\ket{s}=E_{\rm min}^{\rm local}\ket{\nearrow}\otimes\ket{s}$ independent of $s=0,1$ and similarly for the other singlet configuration. $h_\Delta$ can act non-trivially only on triangles covered by a triplet and can propagate the excitation only to one of the neighbouring triangles (cf. Appendix \ref{append:localDynamics}). Henceforth, $h_{\Delta_{i_1}}\dots h_{\Delta_{i_k}}$ can at most create a set of $k$ triangles of the $N/2$ triangles that may not be in the singlet configuration. 
Let now $S$ be the set containing the $4$ triangles excited by the local perturbations plus the at most $k$ triangles which are acted on non-trivially by the $h_\Delta$. 
Then we can bound Eq.~\eqref{math:preBound} using $\norm{h_\Delta}=\frac{3}{4}<1$, the Cauchy-Schwarz inequality and $E_0=\frac{N}{2} \norm{h_\Delta}$
\begin{widetext}
\begin{align}
	&\abs{\matrixelm{3}{V_a\frac{h_{\Delta_{i_1}}\cdots h_{\Delta_{i_k}}}{E_0^k}  V_b}{1}} \leq  \left| {}_{N-k-4}{\braket{3}{1}}{_{N-k-4}} \; {}_{k_1}{\bra{\nwarrow}}\otimes{}_{k_2}{\bra{\nwarrow}}\otimes\left(\sum_{i,j=-1}^1\alpha_{i,j}\bra{\diagdown_i}\otimes\bra{\diagdown_j}\right)_a  \right.\notag\\
	&\quad\left.\times\frac{h_{\Delta_{i_1}}\cdots h_{\Delta_{i_k}}}{E_0^k} \ket{\nearrow}_{k_1}\otimes \ket{\nearrow}_{k_2}\otimes\left(\sum_{i,j=-1}^1\alpha_{i,j}\ket{_i\diagup}\otimes\ket{_j\diagup}\right)_b\right| \leq \frac{2^k}{N^k\sqrt{2}^{N/2-k-4}}, \label{math:offDiagonalEst2}
\end{align}
\end{widetext}
where the states $\ket{j}_{N-k-2}$ are meant to be supported on the complement of $S$ only,
while the triangles in $S$ that are not excited by the excitation are labeled by $k_1$ and $k_2$ fulfilling $k_1+k_2=k+2$. 
If we had chosen a different combination of states $\ket{i}$ and $\ket{j}$ in the beginning, the same argumentation would hold just that for a combinations such as $\ket{1}$ and $\ket{2}$, where the states coincide in one of the layers, the exponent in the final estimate in Eq.~\eqref{math:offDiagonalEst2} changes from $N/2-k$ to $N/4-k$, as then only half the triangles are populated by a different covering.
Finally, combining this bound with \eqref{math:offDiagonalEst1_1} and \eqref{math:offDiagonalEst1_2} we can conclude that
\begin{align}
	|\Delta_{i,j}| \leq \sum_k \lim_{N\rightarrow\infty} \frac{N^2}{\sqrt{2}^{N/4-k-4}} = 0
\end{align}
 for $i\neq j$.

Let us now investigate the diagonal elements $\Delta_{i,i}$. 
For those it is crucial to investigate the action of $H_{AF}$ onto the states $V_a\ket{i}$. 
In Appendix \ref{append:localDynamics} we investigate the action of the local terms $h_\Delta$ on locally excited states of the form $\ket{{}_i{\diagup}}\otimes\ket{\nearrow}_{N-1}$, the other configuration follows directly from inversion symmetry. 
We find that the excitation may be spread only to the neighbouring triangle to the right, while the triangle originally occupied by the excitation will always remain occupied by some triplet excitation.
In particular, no combination of local terms can transform a triplet on the original triangle into a singlet state. 
This can be used straightforward in order to calculate the corresponding action of the local terms $h_\Delta$ in the BDC on the states $V_a\ket{i}$. 
It follows directly that $V_a\ket{i}$ can recombine only with itself, respectively that
\begin{align}
	\bra{i}V_aV_b\ket{i}=0\quad\Rightarrow\quad\bra{i}V_aH_{AF}^kV_b\ket{i}=0
\end{align}  
Now it is easy to verify that for the parallel configurations $V_a\ket{i}=V_b\ket{i}$ for $a=b$ as well as if $b$ equals the ferromagnetic edge next to $a$ intersecting the same pair of singlets. For the alternating configuration, however, $V_a\ket{i}\perp V_b\ket{i}$ for every $b\neq a$. Hence, for every $k$ exactly twice as many terms survive the sum over pairs of ferromagnetic edges for the parallel configurations as opposed to the alternating configuration.

Next we make use of the fact that the locally excited states $V_a\ket{i}$ are exponentially located in the low energy spectrum of $H_{AF}$. In particular, we can decompose
\begin{align}
	\bra{i}V_a \left(-H_{AF}\right)^k V_a\ket{i} = V_>^{(k)} - V_\leq^{(k)}
\end{align}
into positive and semi-negative parts. 
Note that for $k$ even, the negative part vanishes trivially as $(-H_{AF})^k$ is a positive operator then.
Assuming $k$ odd and using (in the last estimate) Theorem 2.1 from \citep{Arad2016}, $\norm{V_a}=\frac{1}{4}$ and $\norm{H_{AF}}=\frac{3}{8}N$ we find
\begin{align}
	\abs{V_\leq^{(k)}} &= \abs{\bra{i}V_aP_\geq(-H)^kP_\geq V_a\ket{i}} \leq \norm{H}^k \norm{P_\geq V_a\ket{i}}^2\notag\\
	&=\norm{H}^k \norm{P_\geq V_aP_0\ket{i}}^2 \leq \norm{H}^k \norm{P_\geq V_aP_0}^2\notag\\
	&\leq  c(3cN/2)^ke^{-\lambda(3cN/2-2R)} \propto N^ke^{-N}
\end{align}
with 
\begin{align}
c=\frac{1}{4},\, R=12c,\,\lambda=\frac{1}{2gk^\prime},\,g=6c
\end{align}
and $k^\prime=3$\footnote{Here, we have used the notation from Ref.~\citep{Arad2016} with $R=\sum_{\Delta\in E_V}\norm{h_\Delta}$ where $E_V$ is defined by $[H_{AF},V_a]=\sum_{\Delta\in E_V}[h_\Delta,V_a]$, $k^\prime=\max _\Delta\abs{{\rm supp}(h_\Delta)}$ and $g=\max_i\abs{\{h_\Delta:\;i\in{\rm supp}(h_\Delta)\}}$}, $P_0$ being the ground space projector and $P_\geq$ being the projector on the space corresponding to energies $E\geq0$. Hence, $V_\leq^{(k)}\rightarrow0$ in the thermodynamic limit, from which it follows that 
\begin{align}
	\biggl\langle i\biggl|V_a \biggl(\frac{H_{AF}}{E_0}\biggl)^k V_a \biggl| i\biggr\rangle\geq0.
\end{align}
We conclude that $\Delta_{i,j}\leq0$. Hence, the alternating configuration is suppressed in the weak coupling limit.

It is easy to see that the parallel configuration, as opposed to the alternating configuration, can be written as a tensor product state with respect to the blocked sites. Additionally, the representative states each break inversion symmetry. 
We therefore conclude this study with a two-fold degenerate symmetry broken ground state space in the weak coupling limit spanned by the two symmetry broken states $\ket{1}$ and $\ket{3}$.

\subsection{Matrix product state simulations}
In the proceeding sections we found that for very strong ferromagnetic couplings, the effective spin-$1$ model describes the physics of the original model well whereas at weak coupling strength the system is in a different SPT phase from the effective model.
Therefore, we simulated the BDC in order to verify our findings from the proceeding sections. Moreover, using the order parameter $\mcal{O}_{SO(3)}$ we identify the critical coupling strength from which on the effective model and blocked system are in the same SPT phase.
Using the time dependent variational principle (TDVP) for uMPS\cite{2011PhRvL.107g0601H} we compute the $\theta$-dependent ground state of the BDC. 
With this, we are then able to compute the projective symmetry representation\cite{PollmannTurner} characterizing the SPT phase.  
To do so, we use a modified version of the order parameter $\mathcal{O}_{SO(3)}$ derived in Sec.~\ref{derivationOrderParameter} as explained below. 
Using these numerical simulations we estimate the critical coupling strength for the transition to be $\theta_c\approx0.58$ corresponding to $J_F\approx0.66J_{AF}$. 
The corresponding data is shown in Fig.~\ref{phaseDiagram}.

\begin{figure}
\includegraphics{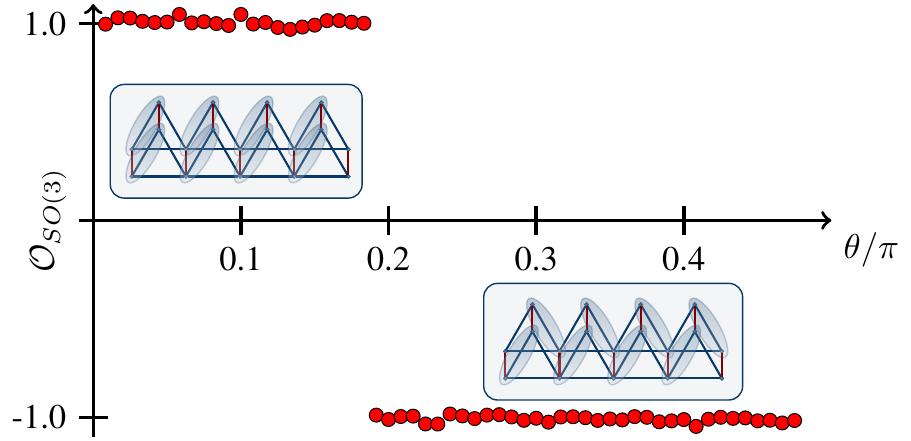}
\caption{Plot of the $\theta$ dependent order parameter evaluated for a uMPS approximation of the ground state of the blocked double layered $\Delta$-chain with bond dimension $D=64$ for $\theta$ linearly interpolating from $0$ to $\pi/2$ in $60$ steps. 
The data displays a clear change of the order parameter from $1$ (trivial phase) to $-1$ (Haldane phase) at around $\theta =0.18\pi$.
The dimer states in the insets show based on the dime-dimer correlation analysis presented in Fig.~\ref{fig:dimerDimer} the favored representative states for the respective phase corresponding to $\ket{1}$ for the trivial and $\ket{2}$ for the Haldane phase.}
\label{phaseDiagram}
\end{figure}

The BDC has a periodicity of four (respectively two per layer), such that  we use an enlarged unit cell containing four consecutive spin-$1/2$ vertices for a suitable uMPS description. 
Henceforth, we modified the order parameter derived in \ref{derivationOrderParameter} with respect to the physical symmetry used to derive the $V$ matrices exchanging \eqref{math:physicalSymmetry} by
\begin{align}
	\mcal{R}_j=\bigotimes_{k=1}^4 \exp(i\pi \sigma_j/2)
\end{align}
corresponding to the effective symmetry group
\begin{align}
	\tilde{\mcal{G}} &= \{g\otimes g\otimes g\otimes g:\; g\in SU(2)\}\nonumber\\
	&\simeq \{\tilde{g}\otimes \tilde{g}: \tilde{g}\in SO(3)\}
\end{align}
corresponding to a generalized local Hilbert space $(\cc^2)^{\otimes4}$. 
In the blocking of opposing vertices we essentially through the tensor product annihilated a negative sign in $SU(2)$ corresponding to  the quotient $SU(2)/\mathbb{Z}_2\simeq SO(3)$ as explained in Sec.~\ref{sec:IsomorphismOfGAndSO3}.
Hence, $\tilde{\mcal{G}}\simeq\mcal{G}$ as $\mcal{G}\simeq SO(3)$ contains no non-trivial normal subgroup that could be factored out through the additional tensor product. The SPT phases in this model are therefore the same as in the model with respect to $\mcal{G}$. 
However, in case of a non-trivial SPT phase further analysis is necessary in order to make sure that this phase is not depending on the blocking of four sites. 
This means, one has to make sure that the entanglement responsible for that phase is not only between the enlarged unit cells but also within the unit cell, such that a cut through it, mapping the enlarged unit cell back to the original blocked model, would become trivial. 

As a first sanity check we computed the uMPS representation of the ground state with respect to a shifted unit cell containing the last two vertices of the original unit cell and the first two vertices of the next unit cell (as explained in Fig.~\ref{fig:mpsShift}). 
Applying the modified $\mcal{O}_{SO(3)}$ on this representation we find the same behavior of the order parameter as shown in Fig.~\ref{phaseDiagram} corresponding to the original unit cell. This is strong evidence for the entanglement being non-trivial not only between the two consecutive four site unit cells but also within those unit cells. 

\begin{figure}
 \includegraphics{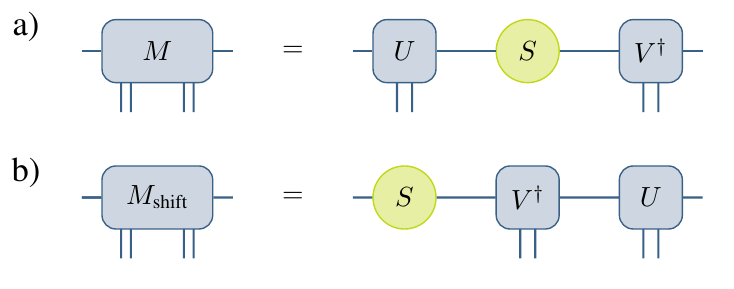}
 \caption{Illustration of the shift operation performed on the level of the uMPS tensor in order to test the order parameter for an alternative blocking. 
 Given the uMPS tensor $M$ we decompose it using a singular value decomposition as shown in panel a) into components that are associated to the different ferromagnetically coupled pairs of vertices. 
 We then define a new shifted uMPS tensor as shown in the lower panel b). Each of the matrices $U$ and $V$ has two physical legs highlighting the fact that we cut between two blocked unit cells containing two spin-$1/2$ spins each.}
 \label{fig:mpsShift}
\end{figure}

\begin{figure*}
	\includegraphics[width=\textwidth]{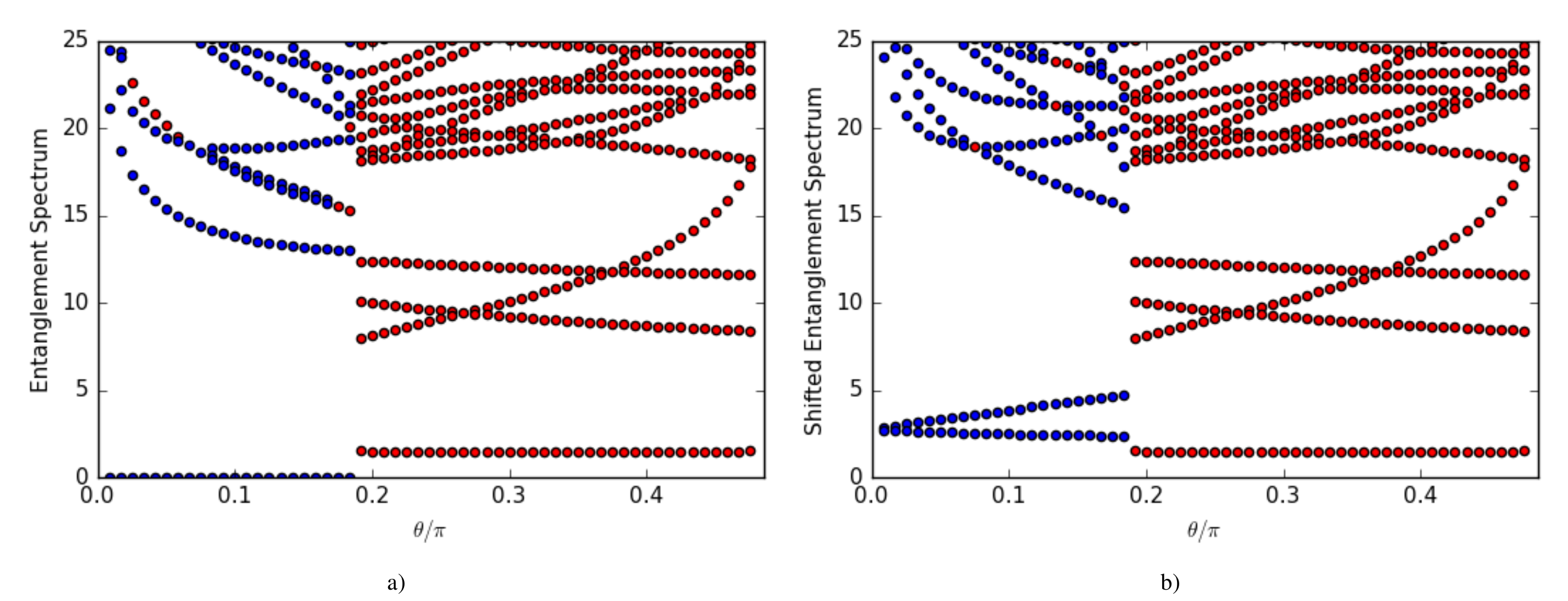}
	\caption{Plot of the $\theta$ dependent entanglement spectra over different cuts of the system based on the ground state approximation using a uMPS with bond dimension $D=64$. 
	In figure a) on the left we plot the entanglement spectrum computed as explained in Ref.~\cite{EntanglementSpectrumTopologicalPhase} of the state for a cut between the enlarged 4 site unit cells. Figure b) on the right displays the entanglement spectrum of the shifted uMPS corresponding to a bond dimension $D=256$ constructed as explained in Fig.~\ref{fig:mpsShift}, i.e.,~for a cut separating the two ferromagnetically coupled pairs of vertices inside one unit cell of the uMPS. In each plot a blue point indicates an oddly degenerate spectral value. Similarly, a red point marks an evenly degenerate spectral value. We counted two values to be degenerate if they deviate not more than $10^{-2}$ in relative error. 
	For both cuts one can clearly see the transition from a mostly oddly degenerate spectrum to a fully evenly degenerate spectrum around $\theta_c\approx0.18\pi$. This is consistent with the results for the order parameter presented in Fig.~\ref{phaseDiagram} and indicates a phase transition into an SPT phase.}
	\label{fig:ES}
\end{figure*}

To underline this we further analysed the entanglement  spectrum (ES) with respect to a cut between the original unit cells, as well as with respect to a cut through the four site unit cell between two neighbouring blocked sites. The results are shown in Fig.~\ref{fig:ES}. 
In accordance with the generalized order parameter, we find a transition from a dominantly oddly degenerate ES for $\theta<\theta_c$ to an evenly degenerate ES for $\theta>\theta_c$. 
For a cut through the enlarged unit cell we find an ES that is not degenerate at all levels (and in particular not evenly degenerate in the low energy levels) for $\theta<\theta_c$ whilst it is evenly degenerate for $\theta>\theta_c$ supporting evidence of a phase transition. 
We conclude from the ES, between and through the enlarged unit cells that the entanglement within the unit cell for $\theta>\theta_c$ is non-trivial. 
Combining this observation with \cite{EntanglementSpectrumTopologicalPhase} we find strong evidence for the state being in the Haldane phase with respect to the blocked two-site unit cell.

To align our numerics with the perturbative results, as well as to better understand the structure of these phases, we analyzed the dimer-dimer correlation function 
\begin{align}
	\mcal{D}(i,j) = \frac{1}{4} \langle \left( 2 - (s_i+s_{i+2})^2 \right) \left( 2 - (s_j+s_{j+2})^2  \right) \rangle
\end{align}
evaluated in the ground state, which yields the probability of having a dimer on the edge $\langle i,i+2 \rangle$ and a dimer on the edge $\langle j,j+2 \rangle$ (where even and odd sites correspond to different layers respectively as in Fig.~\ref{lattice}). 
The $\theta$ dependent results are shown in Fig.~\ref{fig:dimerDimer}. 
One can clearly see a transition from the system favoring a parallel dimer configuration to an alternating dimer configuration around $\theta_c$. 
Moreover we find the dimer-dimer correlation for a parallel configuration placed on the edges joining two unit cells of size 4 to be negligible throughout at low values of $\theta$ and to agree with the other parallel dimer configuration for $\theta > \theta_c$. 
If we invert the lattice, exchanging the roles of the ground states (implicitly mapping $\ket{1}\leftrightarrow\ket{3}$ and $\ket{2}\leftrightarrow\ket{4}$ using the notation defined in \eqref{unperturbedGroundState1} and \eqref{unperturbedGroundState4}) 
we find the resulting state, corresponding to the second parallel configuration joining the four site unit cells, being a ground state as well representing the symmetry broken two-fold degenerate ground state space in the weak coupling regime. 
This bias towards one of the two parallel dimer configurations can be explained from the blocking into a unit cell of 4 sites in the uMPS calculation. The simulation will favor the less correlated solution for this specific blocking being the parallel configuration $\ket{1}$ with trivial bond dimension for weak enough couplings.
Moreover, we find the same $\theta$ dependence of $\mcal{D}(1,2)$ and $\mcal{D}(0,3)$ and therefore show only $\mcal{D}(1,2)$ in Fig.~\ref{fig:dimerDimer}.
In the strong coupling regime $\theta>\theta_c$ we find the inverted state to be the same state as the original state in accordance with the symmetry present in the alternating dimer dimer correlators. We take this as evidence for a unique ground state corresponding to the spin one ground state.

Based on the numerical results laid out here, we therefore confirm that the system is 
in a two-fold degenerate symmetry broken phase where each of the symmetry broken states is in a trivial SPT phase with respect to the $SO(3)$ symmetry for couplings $\theta<\theta_c$ and undergoes a phase transition to the Haldane phase at $\theta=\theta_c\approx 0.18\pi$. In the regime of strong couplings, $\theta > \theta_c$ the essential physics of the BDC is therefore well captured by the effective spin-$1$ model.

\begin{figure*}
\includegraphics[width=\textwidth]{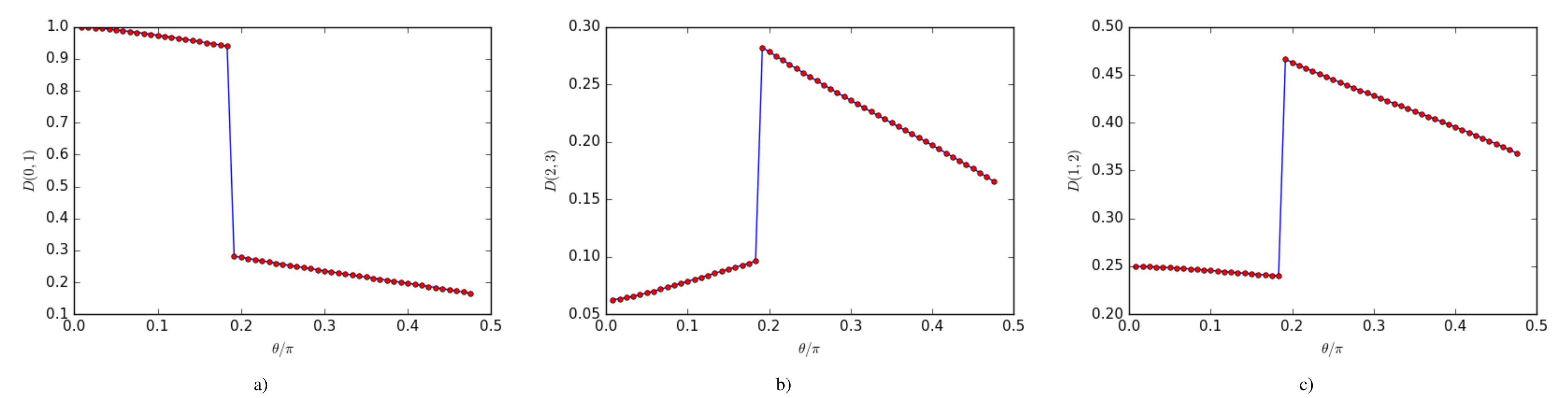}
\caption{Plot of the $\theta$ dependent dimer-dimer correlations along different pairs of anti-ferromagnetic edges. Figure a) on the left displays the dimer-dimer correlation $\mcal{D}(i,j)$ on edges $\langle0,2\rangle$ and $\langle1,3\rangle$ inside of one unit cell of size 4 used for the uMPS simulation. 
The middle figure b) shows the dimer-dimer correlator for dimers on the edges joining two unit cells of size 4 where the figure on the right display the correlator for the alternating configuration on the edges $\langle1,3\rangle$ and $\langle2,4\rangle$. One clearly can see the parallel configuration $\ket{1}$ being favored in the weak coupling regime as expected from to the weak coupling perturbation theory presented in Sec.~\ref{sec:Weak coupling limit}. In the strong coupling regime we find the alternating configuration to be favored giving rise to a symmetry protected long range entangled state similar to the AKLT-state.}
  \label{fig:dimerDimer}
\end{figure*}

\section{Conclusion}
In this work we have briefly summarised the formalism for classifying SPT phases in one-dimensional quantum lattice systems and with focus on the composition of quantum systems. 
Our discussion here focused on the case of the spin-$1/2$ models with effective spin-$1$ theories induced by variable ferromagnetic couplings between pairs of spins. 

We derived an order parameter similar to established order parameters for spin 1 systems that can be used in order to compare a composed spin $\frac{1}{2}\otimes\frac{1}{2}$ system with an approximate effective spin-$1$ model in a rigorous manner. In a pedagogical case study of the BDC we apply this parameter to show that in the strong coupling limit the treatment of the composed system as an effective spin-$1$ system is valid on the level of phases. 
We find that the weakly coupled system shows a different behavior from that of the spin-$1$ model already on the level of SPT phases which is in contrast to the intuition one might obtain from standard spin-$1/2$ Heisenberg ladders \cite{WhiteSpin1Chain, Pollmann2012, Dagotto1996, Totsuka1995, Matsumoto2004,PhysRevLett.105.077202}. 
In those ladders an arbitrarily weak coupling suffices to drive the system into the non-trivial Haldane phase.
We expect this difference to be rooted in the fact that a single $\Delta$-chain is degenerate but gapped as opposed to the Heisenberg chain \cite{Monti1991}. 
The transition of the BDC into the Haldane phase corresponding to the spin-$1$ physics occurs at the ferromagnetic coupling strength $J_F\approx0.66J_{AF}$ being of the same magnitude as the anti-ferromagnetic coupling. 
In general we assume that the coupling needs to be at least of the order of the energy scale introduced by the gap of the uncoupled system,
which in our case is of the order $\Delta E\approx 0.21 J_{AF}$\cite{Kubo1993}, 
in order to guarantee the spin-$1$ physics to be dominant in the local Hilbert space decomposition. At this strength the coupling can be strong enough to couple the unperturbed ground state space with the spin-$1$ sub-space suppressing the spin $0$ sub-space.

The presented scheme can be generalized given a model symmetric with respect to the symmetry group $G$ containing $\mathbb{Z}_2$ as a normal subgroup. The tensor product implicit in the blocking may lead to a new symmetry group $H\simeq G/\mathbb{Z}_2$. If the SPT phases of $H$ are different from those of $G$ an effective description by means of $H$ can be analyzed in the same fashion as we did concerning $G=SU(2)$ and $H\simeq SO(3)$. Moreover, considering blockings of $n$ sites one needs to investigate the difference between the second cohomology classes of $G$ and $H\simeq G/\mathbb{Z}_n$. 
The analysis presented here therefore  illustrates a way to numerically investigate the validity of an effective theory by testing their essential physical features based on the classification of SPT phases and also exemplifies how the composition and coupling of apparently trivial systems can lead to non-trivial symmetry protected topologically ordered phases.

\acknowledgments
We warmly thank Bella Lake for inspiring discussions that motivated some of this work. We thank Ioannis Rousochatzakis 
for early help with exact diagonalization tools. 
Moreover, we thank Marek Gluza, Andreas Bauer, and Zoltan Zimboras 
for fruitful discussions. This work was supported by the Helmholtz Centre Berlin,
the DFG (CRC 183, specifically project B1, and the Emmy Noether programme BE 5233/1-1), the
Wallenberg Academy Fellows program of the KAW Foundation,
the Templeton Foundation,
the ERC (TAQ), and the EU (AQuS). 

\section{Appendix}
\subsection{Local dynamics in the single-layered $\Delta$-chain}\label{append:localDynamics}
In this section, we briefly compute the local dynamics of excitations in a single-layered $\Delta$-chain. In particular, we give the elementary equations on four spins that define the dynamics on the full chain. Therefore, by $h_\Delta$ we denote the local Hamiltonian acting as $h_\Delta = \sum_{i=0}^2 s_i\cdot s_{i+1_{{\rm mod}3}}$. Then, we obtain
\begin{align*}
	&h_\Delta \ket{\nearrow,\nearrow} = -\frac{3}{4} \ket{\nearrow,\nearrow},\\
	&h_\Delta \ket{\nearrow,{{}_i}{\diagup}} = -\frac{3}{4} \ket{\nearrow,{{}_i}{\diagup}},\\
	&h_\Delta \ket{{{}_1}{\diagup},\nearrow} = 
	\frac{1}{4} \ket{{{}_1}{\diagup},\nearrow} + \frac{1}{2} \ket{{{}_1}{\diagup},{{}_0}{\diagup}}  - \frac{1}{2} \ket{{{}_0}{\diagup},{{}_1}{\diagup}}    ,\\
	&h_\Delta \ket{{{}_{-1}}{\diagup},\nearrow} = \frac{1}{4} \ket{{{}_{-1}}{\diagup},\nearrow} - \frac{1}{2} \ket{{{}_{-1}}{\diagup},{{}_0}{\diagup}}  + \frac{1}{2} \ket{{{}_0}{\diagup},{{}_{-1}}{\diagup}}   , \\
	&h_\Delta \ket{{{}_{0}}{\diagup},\nearrow} = \frac{1}{4} \ket{{{}_{0}}{\diagup},\nearrow} + \frac{1}{2} \ket{{{}_{1}}{\diagup},{{}_{-1}}{\diagup}}  - \frac{1}{2} \ket{{{}_{-1}}{\diagup},{{}_{1}}{\diagup}} ,\\   
	&h_\Delta \ket{{{}_{1}}{\diagup},{{}_{0}}{\diagup}} = \frac{1}{4} \ket{{{}_{1}}{\diagup},{{}_{0}}{\diagup}} + \frac{1}{2} \ket{{{}_{0}}{\diagup},{{}_{1}}{\diagup}}  + \frac{1}{2} \ket{{{}_{1}}{\diagup},\nearrow} ,\\
	&h_\Delta \ket{{{}_{-1}}{\diagup},{{}_{0}}{\diagup}} = \frac{1}{4} \ket{{{}_{-1}}{\diagup},{{}_{0}}{\diagup}} + \frac{1}{2} \ket{{{}_{0}}{\diagup},{{}_{-1}}{\diagup}}  - \frac{1}{2} \ket{{{}_{-1}}{\diagup},\nearrow}, \\   
	&h_\Delta \ket{{{}_{0}}{\diagup},{{}_{0}}{\diagup}} = \frac{1}{4} \ket{{{}_{0}}{\diagup},{{}_{0}}{\diagup}} + \frac{1}{2} \ket{{{}_{1}}{\diagup},{{}_{-1}}{\diagup}}  + \frac{1}{2} \ket{{{}_{-1}}{\diagup},{{}_{1}}{\diagup}} ,\\   
	&h_\Delta \ket{{{}_{0}}{\diagup},{{}_{1}}{\diagup}} = \frac{1}{4} \ket{{{}_{0}}{\diagup},{{}_{1}}{\diagup}} + \frac{1}{2} \ket{{{}_{1}}{\diagup},{{}_{0}}{\diagup}}  - \frac{1}{2} \ket{{{}_{1}}{\diagup},\nearrow} ,\\   
	&h_\Delta \ket{{{}_{0}}{\diagup},{{}_{-1}}{\diagup}} = \frac{1}{4} \ket{{{}_{0}}{\diagup},{{}_{-1}}{\diagup}} + \frac{1}{2} \ket{{{}_{-1}}{\diagup},{{}_{0}}{\diagup}}  - \frac{1}{2} \ket{{{}_{-1}}{\diagup},\nearrow}, \\   
	&h_\Delta \ket{{{}_{1}}{\diagup},{{}_{-1}}{\diagup}} = \frac{1}{4} \ket{{{}_{1}}{\diagup},{{}_{-1}}{\diagup}} + \frac{1}{2} \ket{{{}_{0}}{\diagup},{{}_{0}}{\diagup}}  - \frac{1}{2} \ket{{{}_{0}}{\diagup},\nearrow}, \\   
	&h_\Delta \ket{{{}_{-1}}{\diagup},{{}_{1}}{\diagup}} = \frac{1}{4} \ket{{{}_{-1}}{\diagup},{{}_{1}}{\diagup}} + \frac{1}{2} \ket{{{}_{0}}{\diagup},{{}_{0}}{\diagup}}  - \frac{1}{2} \ket{{{}_{0}}{\diagup},\nearrow} .
\end{align*}
From this it follows, in particular, that the local Hamiltonian terms $h_\Delta$ may propagate the triplets through the chain, whilst the triangle initially occupied by the excitation $\ket{{}_i{\diagup}}$ will always be covered by some triplet excitation.


\bibliographystyle{apsrev4-1}
%

\end{document}